\newtheorem{lem}{Lemma}
\newtheorem{rem}{Remark}
\newtheorem{thm}{Theorem}
\newtheorem{conj}{Conjecture}
\theoremstyle{nonumberplain}
\newtheorem{proof}{Proof}
\newcommand{\vh}{{\mathbf{h}}}
\newcommand{\vq}{{\mathbf{q}}}
\newcommand{\vt}{{\mathbf{t}}}
\newcommand{\vA}{{\mathbf{A}}}
\newcommand{\vD}{{\mathbf{D}}}
\newcommand{\vH}{{\mathbf{H}}}
\newcommand{\vI}{{\mathbf{I}}}
\newcommand{\vQ}{{\mathbf{Q}}}
\newcommand{\vU}{{\mathbf{U}}}
\begin{document}

\title{On the Outage Probability Conjecture for MIMO Channels}

\author{Gen Li, Jingkai Yan, and Yuantao Gu%
\thanks{
The authors are with the Department of Electronic Engineering and Tsinghua National Laboratory for Information Science and Technology (TNList), Tsinghua University, Beijing 100084, CHINA.
The corresponding author of this paper is Yuantao Gu (gyt@tsinghua.edu.cn).
}}
\date{}

\maketitle

\begin{abstract}
  Multiple-Input-Multiple-Output (MIMO) communication systems have seen wide application due to its performance benefits such as multiplexing gain. For MIMO systems with non-ergodic Gaussian channel, a conjecture regarding its outage probability has been proposed by Telatar in \cite{telatar1999capacity}. This conjecture has been proved for the special single-output case, and is in general assumed to be true. In this work, we analyze the special Two-Input-Multiple-Output (TIMO) case theoretically, and provide a counter-example to the conjecture. The counter-example is verified both theoretically and by numerical experiments. We also present a theoretical analysis for general MIMO case, including a method for calculation. This result rejects the decades-long conjecture and provides interesting insight into the symmetry of MIMO systems.

\textbf{Keywords}:
Outage probability conjecture, MIMO channels, Gaussian random matrix, non-ergodic channels
\end{abstract}

\section{Introduction}

The method of Multi-Input-Multi-Output (MIMO) has seen wide application since its pioneering proposal in the 1990s \cite{paulraj1994increasing,raleigh1998spatio,foschini1996layered}. The use of MIMO significantly improves data rates and allows for multi-user communication \cite{zheng2003diversity,andrews2007fundamentals}, and has currently been adopted as standard by WLAN, WiMAX and LTE networks \cite{xiao2005ieee,damnjanovic2011survey,andrews2007fundamentals}. Furthermore, MIMO plays an essential part in the future generations of communication systems, as it is crucial to the currently developing 5G \cite{boccardi2014five,jungnickel2014role} and the popular mmWave MIMO systems \cite{ayach2014spatially,heath2015overview}. Reviews on the important role of MIMO in contemporary communication are provided in \cite{bj2015massive,heath2015overview}.


The channel capacity of MIMO system has been studied extensively by researchers. Telatar is believed to be the first to derive theoretical results regarding MIMO capacity \cite{telatar1999capacity}. In that work he proposed expressions of capacity of mean (ergodic) Gaussian channel with Rayleigh fading. A number of later works also address this problem. \cite{marzetta1999capacity} discussed the capacity and obtained insightful results on parameter setting. \cite{loyka2001channel} discussed the capacity when correlation exists in real-world circumstances. The case when interference occurs with multiple users sharing the channel is addressed by \cite{blum2003mimo}.


Despite the abundance of existing work, the vast majority is only focusing on the problem for mean (ergodic) Gaussian channel. For non-ergodic channels, on the other hand, previous analysis techniques for calculating channel capacity would fail because the capacity is in general not equal to the maximum mutual information \cite{telatar1999capacity}. In the work of \cite{telatar1999capacity}, the author provided an alternative analysis using the outage probability. \cite{oyman2006non} studied the power-bandwidth tradeoff for non-ergodic channels. \cite{letzepis2009outage} studied the outage probability for block-fading non-ergodic channels. A conjecture regarding optimum condition for the minimum outage probability was proposed in \cite{telatar1999capacity}. The conjecture has been widely assumed to be true \cite{prasad2003outage, abbe2012proof}, and a special case of Multiple-Input-Single-Output (MISO) was proved by \cite{abbe2012proof}.

In this work, we provide a new analysis technique to the outage probability conjecture, particularly in the case of Two-Input-Multiple-Output (TIMO), and prove that the conjecture can actually be false. We provide a concrete counter-example, verified both by theoretical calculation and numerical simulation. We also present an analysis for the general MIMO setting, and provide a method to calculate the derivatives involved, as well as for a particular type of power distribution.

The rest of this paper is organized as follows. Section II reviews the Outage Probability Conjecture and the proven case of MISO. Theoretical analysis of the TIMO case and the counter-example is given in Section III, which is the main contribution of our work. Section IV discusses the general MIMO case. Numerical experiments are conducted in Section V. Section VI offers concluding remarks.

\emph{Notations}:
$\vH^{\rm T}$ and $\vH^*$ denote respectively the transpose and conjugate transpose of $\vH$.
$\#\mathcal{S}$ denotes the number of elements in set $\mathcal S$.

\section{The Outage Probability Conjecture}

\subsection{The Conjecture}

Consider a single-user MIMO Gaussian channel. Assume the number of transmitting antennas and receiving antennas is $t$ and $r$, respectively. The channel model can be described as
\begin{equation}
 {\bf y} = {\bf H} {\bf x} + {\bf n},
\end{equation}
where ${\bf x}\in \mathbb{C}^t$, ${\bf y}\in \mathbb{C}^r$, and ${\bf n}\in \mathbb{C}^r$ denotes
the transmitted vector, the received vector, and the additive noise, respectively.
The channel is denoted by ${\bf H} \in \mathbb{C}^{r\times t}$,
where each entry is \emph{i.i.d.} circularly symmetric complex Gaussian variable with
$$
\mathbb{E}\left|H_{i,j}\right|^2 = 1.
$$

In the non-ergodic case, the matrix $\bf H$ is random but is held fix once it is chosen. In this case, we consider the outage probability for evaluation of the channel. Let $R$ be the data rate and $P$ be the signal power constraint. The outage probability $P_\mathrm{out}(R,P)$ is defined as follows:
\begin{equation} \label{eqn_outprob_orig}
P_\mathrm{out}(R,P) := \inf_{\substack{\vQ: \vQ \geq 0,\\ \text{tr}(\vQ) \leq P}} \mathbb{P}[ \log \det (\vI_r + \vH \vQ \vH^*) < R].
\end{equation}
In Telatar's words, $P_\mathrm{out}(R,P)$ is a probability such that

\begin{quote}
\emph{For any rate less than $R$ and any $\delta$ there exists a code satisfying the power constraint $P$ for which the error probability is less than $\delta$ for all but a set of $\vH$ whose total probability is less than $P_\mathrm{out}(R,P)$.}
\end{quote}

The conjecture is stated as the following \cite{telatar1999capacity}:

\begin{conj} [Outage Probability Conjecture]
  The matrices $\vQ$ that yield the infimum in \eqref{eqn_outprob} has eigenvalue decomposition $\vU \vD \vU^*$, where $\vU$ is $t\times t$ unitary and $\vD$ has the form of
  \begin{equation} \label{eqn_conj}
    \vD = \frac{P}{k} \mathrm{diag} \left( \underbrace{1,\dots,1}_{k}, \underbrace{0,\dots,0}_{t-k} \right),
  \end{equation}
  where $1\le k\le t$ is some integer.
\end{conj}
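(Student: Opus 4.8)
The plan is to exploit the rotational invariance of the channel to reduce the conjecture to an optimization over eigenvalues, and then to attack the reduced problem through a symmetrization (Schur-convexity) argument. First I would observe that, because each entry of $\vH$ is i.i.d.\ circularly symmetric complex Gaussian, the law of $\vH$ is invariant under right multiplication by any unitary matrix, so $\vH\vU$ has the same distribution as $\vH$. Writing $\vQ=\vU\vD\vU^*$ with $\vD=\mathrm{diag}(\lambda_1,\dots,\lambda_t)$, this gives
\[
\mathbb{P}[\log\det(\vI_r+\vH\vQ\vH^*)<R]=\mathbb{P}[\log\det(\vI_r+\vH\vD\vH^*)<R],
\]
so the outage probability depends on $\vQ$ only through its eigenvalues. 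The problem thus collapses to minimizing $g(\lambda):=\mathbb{P}[\log\det(\vI_r+\vH\,\mathrm{diag}(\lambda)\,\vH^*)<R]$ over the feasible set $\{\lambda\ge 0,\ \sum_i\lambda_i\le P\}$, and the conjecture becomes the claim that the minimizer is one of the vertices $\tfrac{P}{k}(\underbrace{1,\dots,1}_{k},0,\dots,0)$, up to permutation.

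Next I would dispose of two easy structural facts. Since increasing any $\lambda_i$ enlarges $\vH\,\mathrm{diag}(\lambda)\,\vH^*$ in the positive-semidefinite order and hence enlarges the determinant, the outage event shrinks; thus $g$ is coordinatewise nonincreasing, the constraint is active at any minimizer ($\sum_i\lambda_i=P$), and we may restrict to the simplex. Permuting the entries of $\lambda$ corresponds to permuting the columns of $\vH$, which leaves the distribution unchanged, so $g$ is a symmetric function of $\lambda$. We are therefore minimizing a symmetric function over a simplex.

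The core of the argument is then to show that, for each fixed support size $k$, the restriction of $g$ to the relative interior of the corresponding face is minimized at the uniform point $\lambda_i=P/k$; optimizing over $k$ would then yield exactly the conjectured form \eqref{eqn_conj}. The natural tool is Schur-convexity: if $g$ is Schur-convex, then by the Schur--Ostrowski criterion it suffices to verify $(\lambda_i-\lambda_j)\big(\partial g/\partial\lambda_i-\partial g/\partial\lambda_j\big)\ge 0$, and the symmetric minimizer over the simplex is forced to the most balanced, i.e.\ uniform, point. I would first test this in the MISO case $r=1$, where $\log\det(\vI_1+\vH\,\mathrm{diag}(\lambda)\,\vH^*)=\log(1+\sum_i\lambda_i|H_{1,i}|^2)$ and the outage event reduces to a weighted sum of i.i.d.\ exponential variables falling below a threshold; there the required sign can be checked directly, recovering the known proof.

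The hard part will be establishing the sign of the Schur--Ostrowski expression for $r\ge 2$. The derivatives $\partial g/\partial\lambda_i$ are derivatives of the probability of a determinant event for the Wishart-type matrix $\vH\,\mathrm{diag}(\lambda)\,\vH^*$, whose joint eigenvalue density is analytically unwieldy, and there is no a priori reason the sign should be constant across the simplex. I would therefore concentrate on the first genuinely multi-output case, TIMO with $t=2$, computing $g(\lambda_1,\lambda_2)$ and the relevant derivative along the segment $\lambda_1+\lambda_2=P$ explicitly; whether $g$ is Schur-convex there---and hence whether the uniform split actually minimizes outage---is precisely the crux on which the whole conjecture turns.
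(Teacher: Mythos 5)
Your setup is sound and in fact mirrors the paper's own framework: the reduction by unitary invariance to a diagonal $\vQ$, the observation that the trace constraint is active at any minimizer, and the restriction to a symmetric function on the simplex are all correct, and for $t=2$ the paper likewise studies the outage probability as a function $f(q_1,q_2)$ along the segment $q_1+q_2=P$ (via Sylvester's determinant theorem and an explicit probability integral). However, there is no proof of this statement to compare against, because the statement is false, and the paper's main contribution is a counter-example. The step at which your plan necessarily breaks is exactly the one you flagged as the crux: Schur-convexity of $g$ for $r\ge 2$ does not hold. Concretely, for $t=2$, $r=2$, $P=0.5$, $R=\ln 3$, the paper computes (from closed-form first and second derivatives of the integral representation) that
\begin{align*}
  \left. \frac{\mathrm{d} f}{\mathrm{d} q_1} \right|_{q_1 = 0} = 0,\qquad
  \left. \frac{\mathrm{d}^2 f}{\mathrm{d} q_1^2} \right|_{q_1 = 0} = -\frac{8}{\mathrm{e}^4} <0, \qquad
  \left. \frac{\mathrm{d} f}{\mathrm{d} q_1} \right|_{q_1 = \frac{P}{2}} = 0,\qquad
  \left. \frac{\mathrm{d}^2 f}{\mathrm{d} q_1^2} \right|_{q_1 = \frac{P}{2}} \approx -0.1014 <0,
\end{align*}
so along the segment the uniform point $q_1=q_2=P/2$ is a strict local \emph{maximum}, the vertex $(0,P)$ is not a local minimum either, and the true minimizer is an interior, non-uniform point $q_m\in(0,P/2)$. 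Consequently the Schur--Ostrowski expression $(\lambda_i-\lambda_j)\bigl(\partial g/\partial\lambda_i-\partial g/\partial\lambda_j\bigr)$ cannot have a constant sign on the simplex, and no symmetrization argument can push the minimizer to the conjectured set $\tfrac{P}{k}(1,\dots,1,0,\dots,0)$. The MISO case $r=1$ that you propose as a warm-up is genuinely different: there the conjecture is true (this is the proved case of Abbe et al.), and its proof does not extend, precisely because the intuitive symmetry fails once $r\ge 2$. Had you carried out the explicit TIMO computation you describe in your last paragraph, you would have arrived at a disproof rather than a proof; the correct deliverable here is the counter-example together with a sufficient condition of the form ``first derivatives vanish and second derivatives are negative at both the vertex and the uniform point,'' which is the paper's Theorem 1.
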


The intuition behind this conjecture is the symmetry between the transmitters, where they are either working at some same power ${P}/{k}$ or not working. This seems very plausible and therefore has been generally accepted as true.

While the above gives the original form of the Conjecture, in actuality we need only to consider the case of $\mathrm{tr}(\vQ)=P$, as the outage probability becomes smaller as maximum allowed power increases \cite{abbe2012proof}. Therefore the model can be expressed as
\begin{equation} \label{eqn_outprob}
P_\mathrm{out}(R,P) := \inf_{\substack{\vQ: \vQ \ge 0,\\ \text{tr}(\vQ) = P}} \mathbb{P}[ \log \det (\vI_r + \vH \vQ \vH^*) < R].
\end{equation}

\subsection{The Proven MISO Case}

In the work of Abbe et al \cite{abbe2012proof}, the authors in particular focused on the MISO case and offered a proof for the Conjecture. For the MISO case where $r=1$, we are able to remove the determinant in \eqref{eqn_outprob}, leading to the following conjecture:
\begin{conj} [Outage Probability Conjecture for MISO]
Let
$$
\mathcal{D}(t):=\{\vQ\in \mathbb{C}^{t\times t} \ | \ \vQ\ge 0, \mathrm{tr}(\vQ)\le 1\}
$$
and $\left(H_i\right)_{1\le i\le t}$ $\stackrel{i.i.d}{\sim} \mathcal{N}_\mathbb{C}(0,1)$. For all $x\in \mathbb{R}_+$, there exists $k\in \{1,\dots,t\}$ such that
\begin{align}
    \nonumber
    \arg\min_{\vQ\in \mathcal{D}(t)} \mathbb{P}\{\vH\vQ\vH^*\le x\} = 
     \left\{ \vU \mathrm{diag}\left( \underbrace{ \frac{1}{k},\dots,\frac{1}{k}}_{k} , \underbrace{0,\dots,0}_{t-k}\right) \vU^* : \vU\in \mathcal{U}(t) \right\}, \label{eqn_conj2}
\end{align}
where $\mathcal{U}(t)$ denotes the group of $t\times t$ unitary matrices.
\end{conj}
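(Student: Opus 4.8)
The plan is to exploit the unitary invariance of the i.i.d.\ Gaussian vector $\vH=(H_1,\dots,H_t)$ to collapse the problem onto the eigenvalues of $\vQ$. Since the law of $\vH$ is invariant under $\vH\mapsto\vH\vU$ for every $\vU\in\mathcal U(t)$, writing $\vQ=\vU\vD\vU^*$ with $\vD=\mathrm{diag}(\lambda_1,\dots,\lambda_t)$ gives $\mathbb P\{\vH\vQ\vH^*\le x\}=\mathbb P\{\vH\vD\vH^*\le x\}$, so the objective depends on $\vQ$ only through its spectrum, and any minimizing spectrum pulls back to the full unitary orbit appearing on the right-hand side of the claim. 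Using that the outage probability is nonincreasing in the power budget, I would first argue the constraint is tight, $\sum_i\lambda_i=1$, and then, setting $Z_i:=|H_i|^2\stackrel{\text{i.i.d.}}{\sim}\mathrm{Exp}(1)$, reduce the statement to the purely analytic problem of minimizing
\[
g(\lambda)\;:=\;\mathbb P\Big\{\textstyle\sum_{i=1}^t\lambda_iZ_i\le x\Big\}
\]
over the simplex $\Delta=\{\lambda\ge 0:\sum_i\lambda_i=1\}$, and showing every minimizer has all its nonzero entries equal.

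The core of the argument would be a pairwise-equalization lemma: at any minimizer $\lambda^\star$, any two strictly positive coordinates must be equal. To prove it I would freeze all coordinates except a pair $(\lambda_1,\lambda_2)$, keep their sum $\lambda_1+\lambda_2=c$ fixed, and study the one-dimensional restriction $u\mapsto g(u,c-u,\lambda_3,\dots,\lambda_t)$. This function is symmetric about $u=c/2$, so $u=c/2$ is always a critical point; the whole lemma reduces to showing it is the \emph{only} interior critical point, i.e.\ that the restriction is strictly monotone on each half of $(0,c)$. Granting this, a minimizer with both coordinates positive can only sit at the midpoint (if the midpoint is the segment-minimum), while if the midpoint is a segment-maximum the segment-minimum is attained at an endpoint, which kills one coordinate; in either case two positive coordinates of $\lambda^\star$ are forced to be equal. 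Applying this to every pair and using transitivity of equality then yields that the nonzero entries of $\lambda^\star$ are all equal to $1/k$ for some $k$, which is exactly the uniform-support form asserted in the statement; the optimal $k$ is then selected by comparing the $t$ explicit scalar values $g(\tfrac1k,\dots,\tfrac1k,0,\dots,0)$, and one expects this optimal $k$ to decrease from $t$ to $1$ as $x$ grows.

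The monotonicity of the one-dimensional restriction is where essentially all the difficulty lies. Writing $S=uZ_1+(c-u)Z_2+W$ with $W=\sum_{i\ge 3}\lambda_iZ_i$ independent of $(Z_1,Z_2)$, differentiation gives
\[
\frac{\partial}{\partial u}\,g
\;=\;-\,f_S(x)\,\mathbb E\big[Z_1-Z_2\,\big|\,S=x\big],
\]
so the sign of the derivative is controlled by the conditional mean $\mathbb E[Z_1-Z_2\mid S=x]$. Conditioning further on $W$ reduces this to a computation on the segment $\{uz_1+(c-u)z_2=s\}$, where the conditional law of $(Z_1,Z_2)$ has the explicit exponential form $\propto e^{-z_1-z_2}$; one checks that for small $s$ this conditional mean is negative when $u>c/2$ (so $g$ increases away from the midpoint and the midpoint is the minimum, favoring equalization), while for large $s$ it becomes positive (favoring a boundary point and hence a smaller support).

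The hard part is to prove that, for each fixed $x$, this sign is constant in $u$ on $(c/2,c)$ \emph{after} averaging over the law of $W$, so that no spurious interior critical point appears and the single-crossing picture above is exact. I expect this to be the main obstacle, and to require the special rank-one structure of the single-output channel---namely that $\sum_i\lambda_iZ_i$ is a nonnegative weighted sum of exponentials, whose density is a P\'olya frequency (totally positive) function with strong variation-diminishing properties. I would therefore attack the sign control not by brute differentiation but through a single-sign-change argument for $\partial_u$ of the explicit convolution density, showing it crosses zero exactly once in the time variable and that integrating this against the nonnegative law of $W$ and up to the level $x$ preserves a single sign. It is precisely this total-positivity feature that has no analogue for $r\ge 2$, which is consistent with the paper's thesis that the conjecture fails in the general MIMO setting.
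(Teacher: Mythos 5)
First, a point of comparison: the paper does not actually prove this statement. It appears there only as a restated conjecture, with the proof deferred entirely to Abbe et al.\ \cite{abbe2012proof}; the paper summarizes that proof in one line as ``transforming the conjecture into the problem of Gaussian quadratic forms having largest tail probability corresponding to such diagonal matrices.'' Your opening reduction---unitary invariance of the law of $\vH$, tightness of the trace constraint, and restriction to minimizing $g(\lambda)=\mathbb{P}\{\sum_i \lambda_i Z_i\le x\}$ with $Z_i\stackrel{\text{i.i.d.}}{\sim}\mathrm{Exp}(1)$ over the simplex---is exactly that transformation, so up to this point you are on the same road as the cited proof, and the reduction itself is correct.

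The genuine gap is the step you yourself flag: you never prove the single-crossing property, namely that for every fixed $x$, every $c$, and every admissible law of $W=\sum_{i\ge 3}\lambda_i Z_i$, the restriction $u\mapsto g(u,c-u,\lambda_3,\dots,\lambda_t)$ has no interior local minimum other than the midpoint. Your entire argument (pairwise equalization, transitivity of equality, then comparison over $k$) is conditional on this lemma, and this lemma \emph{is} the theorem: it is precisely where the difficulty of the MISO conjecture is concentrated, which is why the published proof is long and delicate rather than a short convexity argument. Appealing to P\'olya-frequency / variation-diminishing properties of weighted sums of exponentials is a sensible instinct---and it correctly identifies why the argument cannot extend to $r\ge 2$, where the paper's own TIMO counter-example exhibits exactly the off-midpoint interior minima you would need to exclude---but an instinct is not a proof: you would need to exhibit the totally positive kernel, state the variation-diminishing theorem being invoked, and verify that the sign pattern of $\partial_u$ of the relevant density has the claimed single change \emph{uniformly} over the law of $W$ and after integration up to the level $x$. (Note also that your lemma can be weakened to ``no interior local minimum off the midpoint,'' which is less than ``unique interior critical point,'' but even the weaker statement carries the full difficulty.) As it stands, the proposal establishes the standard reduction and the correct endgame but leaves the central analytic fact unproven, so it does not constitute a proof of the statement.
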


After \emph{transforming the conjecture into the problem of Gaussian quadratic forms having largest tail probability corresponding to such diagonal matrices},
the proof of MISO case was completed. For specific details please refer to \cite{abbe2012proof}.

\section{Analysis of TIMO Case and Counter-Example}

In this section, we present our analysis of the Two-Input-Multiple-Output case of the Conjecture. A counter-example is shown and verified.

\subsection{Main Results}

Addressing the general MIMO version of the Outage Probability Conjecture is rather hard, where the major difficulty lies in handling the determinant on the right hand side of \eqref{eqn_outprob}. This also partially explains why the MISO case is comparatively earlier to solve than the general case.

For the TIMO case, we employ another approach to the simplification of the determinant. In this case we can write
\begin{align*}
{\vQ} &= {\rm diag}(q_1,q_2),\\
\vH &= [\vh_1, \vh_2].
\end{align*}
According to Sylvester's determinant theorem \cite{harville1997matrix}, we have
\begin{align}\label{firstderivation}
\det (\vI_r + \vH \vQ \vH^*) 
=& \det (\vI_r + \vH^* \vH \vQ) \nonumber\\
=& 1 \!+\! q_1|\vh_1|^2 \!+\! q_2|\vh_2|^2 \!+\! q_1q_2|\vh_1|^2|\vh_2|^2\!\left(\!1 \!-\! \frac{|\vh_1^*\vh_2|^2}{|\vh_1|^2|\vh_2|^2}\!\right),
\end{align}

By introducing three random variables $S, T$, and $\rho$, we observe the important fact that
\begin{align}
\nonumber
S := q_1|\vh_1|^2 &\sim \frac{q_1}{2} \chi^2_{2r}, \\
\nonumber
T := q_2|\vh_2|^2 &\sim \frac{q_2}{2} \chi^2_{2r}, \\
\nonumber
1 - \rho := \frac{|\vh_1^*\vh_2|^2}{|\vh_1|^2|\vh_2|^2} &\sim {\rm Beta}(1, r-1),
\end{align}
where the probability density $f(x;k)$ for Chi-squared distribution $\chi_k^2$ and $f(x;\alpha,\beta)$ for Beta distribution ${\rm Beta}(\alpha,\beta)$ are, respectively,
\begin{align*}
f(x;k) =& \left\{
    \begin{aligned}
    \frac{x^{\frac{k}{2}-1}\text{e}^{-\frac{x}{2}}}{2^{\frac{k}{2}}\Gamma{\frac{k}{2}}},\quad &x>0; \\
    0, \quad &\text{otherwise}.
    \end{aligned}
\right.\\
f(x;\alpha,\beta) =& \left\{
    \begin{aligned}
    \frac{\Gamma(\alpha+\beta)}{\Gamma(\alpha)\Gamma(\beta)} x^{\alpha-1} (1-x)^{\beta-1} ,\quad &0<x<1; \\
    0, \quad &\text{otherwise}.
    \end{aligned}
\right.
\end{align*}
Notice that these three random variables are independent of each other, as $\frac{|\vh_1^*\vh_2|^2}{|\vh_1|^2|\vh_2|^2}$ can be interpreted as the angle between vectors $\vh_1$ and $\vh_2$, and is therefore independent of the vector lengths.

Based on \eqref{firstderivation} and above interpretation, we could calculate the right hand side of \eqref{eqn_outprob} by a probability integral.
When $q_1, q_2 \ne 0$, the integral is
\begin{align}
&\mathbb{P}[\log \det (\vI_r + \vH \vQ \vH^*) < R]\nonumber\\
=&\mathbb{P}[1+S+T+ST\rho < {\rm e}^R]\nonumber\\
=& \int_0^1 (r-1)\rho^{r-2}{\rm d}\rho 
 \int_0^{{\rm e}^R-1} \frac{s^{r-1}{\rm e}^{-\frac{s}{q_1}}}{(r-1)!q_1^r} \int_0^{\frac{{{\rm e}^R-1} -s}{1+\rho s}} \frac{t^{r-1}{\rm e}^{-\frac{t}{q_2}}}{(r-1)!q_2^r}{\rm d}t {\rm d}s \label{eqn_solution}\\
=:& f(q_1,q_2) \nonumber
\end{align}
When $q_1$ or $q_2$ equals zero, assuming that $q_1 = P, q_2 = 0$ without loss of generality, the integral is
\begin{equation}
\mathbb{P}[\log \det (\vI_r + \vH \vQ \vH^*) < R] = \int_0^{{\rm e}^R-1} \frac{s^{r-1}{\rm e}^{-\frac{s}{P}}}{(r-1)!P^r}{\rm d}s, \label{eqn_solution_0}
\end{equation}
which can also be derived indirectly from \eqref{eqn_solution} by taking the limit $q_1 = P$ and $q_2 \to 0$. This indicates that there is no need to make explicit distinction between the two cases, and we shall use the nonzero case in the following discussion when no ambiguity is possible.

According to the above analysis, the original conjecture is equivalent to the following:
\begin{conj} [Outage Probability Conjecture for TIMO]\label{conjTIMO}
Let
$$
\mathcal{D} = \{(q_1,q_2)\in [0,P]^2 \ | \ q_1+q_2 = P\}.
$$
The pairs of $(q_1,q_2)$ that minimizes $f(q_1,q_2)$ in domain $\mathcal D$ fall in the set
$$
\left\{ (P,0), \left(\frac{P}{2},\frac{P}{2}\right), (0,P) \right\}.
$$
\end{conj}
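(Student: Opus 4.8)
The plan is to collapse the two-dimensional constrained problem in Conjecture~\ref{conjTIMO} to a one-dimensional one by using the equality constraint. Writing $q_1 = x$ and $q_2 = P - x$, I define $g(x) := f(x, P-x)$ for $x \in [0, P]$, so that the conjecture asserts that $g$ attains its minimum only at $x \in \{0, P/2, P\}$. The first observation is a symmetry: since $\vh_1$ and $\vh_2$ are i.i.d.\ and the outage event $1 + S + T + ST\rho < \mathrm{e}^R$ is invariant under the exchange $(q_1, S) \leftrightarrow (q_2, T)$, we have $g(x) = g(P-x)$. Consequently $x = P/2$ is automatically a critical point, $g(0) = g(P)$, and it suffices to analyze $g$ on $[0, P/2]$ and then compare the endpoint value $g(0)$ against the midpoint value $g(P/2)$.

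The second step is to differentiate $g$ using the explicit integral \eqref{eqn_solution}. The key structural feature is that the $x$-dependence enters \emph{only} through the two Gamma-density factors $s^{r-1}\mathrm{e}^{-s/x}/((r-1)!\,x^r)$ and $t^{r-1}\mathrm{e}^{-t/(P-x)}/((r-1)!\,(P-x)^r)$, whereas all three limits of integration---in particular the coupling limit $(\mathrm{e}^R - 1 - s)/(1 + \rho s)$---are independent of $x$. Differentiation under the integral sign is therefore legitimate and yields $g'(x)$ as a sum of two triple integrals, obtained by replacing each density factor by its $x$-derivative; a short computation produces kernels of the form $x^{-r-2}\mathrm{e}^{-s/x}(s - rx)$ and $-(P-x)^{-r-2}\mathrm{e}^{-t/(P-x)}(t - r(P-x))$. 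The goal is then to show that on $(0, P/2)$ these two contributions do not cancel, i.e.\ $g'(x) \neq 0$, so that $P/2$ is the \emph{only} interior critical point. The conjecture would then follow immediately: strict monotonicity of $g$ on $[0,P/2]$ forces its minimum there to an endpoint, and symmetry promotes this to a global minimizer in $\{0, P/2, P\}$.

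The main obstacle is precisely this sign analysis. Unlike the MISO case, where removing the determinant reduces matters to a clean statement about Gaussian quadratic forms, here the multiplicative term $x(P-x)\,|\vh_1|^2|\vh_2|^2\,\rho$ couples the two antennas through the inner limit $1 + \rho s$, so $g'$ is the integral of a quantity that is not manifestly single-signed, and neither convexity of $g$ nor monotonicity on $[0, P/2]$ is evident a priori. I would first attempt to show that the weighted difference of the two density-derivative kernels integrates to a quantity of fixed sign for all $0 < x < P/2$, perhaps via a pairing/symmetrization argument across the swap $s \leftrightarrow t$. I expect this to be the step that genuinely decides the conjecture: if the combined kernel can change sign, then $g$ may develop an additional critical point strictly inside $(0, P/2)$, together with its mirror in $(P/2, P)$, in which case the true minimizer lies off $\{0, P/2, P\}$ and the statement fails. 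Because the asymmetry is introduced by the Beta-distributed $\rho$, I would specifically scrutinize small $r$ and moderate $R$, where the coupling term is most pronounced and a spurious interior minimizer is most likely to appear.
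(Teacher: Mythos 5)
Your reduction to the univariate function $g(x) = f(x,P-x)$, the symmetry $g(x)=g(P-x)$, and the differentiation under the integral sign in \eqref{eqn_solution} are all sound and consistent with the paper's setup. But the step you leave open --- showing that $g'$ has no zero in $(0,P/2)$, i.e.\ that the combined kernel integrates to a single-signed quantity --- is not a gap that can be closed, because it is false: the paper's entire purpose is to \emph{disprove} this conjecture, not prove it. Via Theorem \ref{thm_1} and the derivative formulas in the Appendix, the paper exhibits the parameters $t=2$, $r=2$, $P=0.5$, $R=\ln 3$, for which
$g'(0)=0$ with $g''(0) = -8/\mathrm{e}^4 < 0$, and $g'(P/2)=0$ with $g''(P/2) \approx -0.1014 < 0$.
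Hence $g$ strictly decreases as one moves into the interior from $x=0$, and (by the symmetry you yourself noted) also strictly decreases as one moves away from $x=P/2$; by continuity the minimum on $[0,P/2]$ is attained at some interior point $q_m \in (0,P/2)$ (numerically $q_m\in(0.05,0.1)$, see Fig.~\ref{fig_counter}). So no pairing or symmetrization argument across $s \leftrightarrow t$ can establish your fixed-sign claim, and strict monotonicity of $g$ on $[0,P/2]$ simply does not hold for these parameters.

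To your credit, you explicitly flagged this sign analysis as the step that ``genuinely decides the conjecture,'' and you pointed at small $r$ and moderate $R$ --- where the coupling term $ST\rho$ is strongest --- as the suspect regime; that is exactly where the counterexample lives. But as written, the proposal is an attempt to prove a false statement. The productive route is the opposite one: evaluate the endpoint and midpoint derivatives in closed form (as the paper does in the Appendix, where for this parameter set the first derivatives at $q_1=0$ and $q_1=P/2$ both vanish and the second derivatives are both negative), and invoke the sufficient condition of Theorem \ref{thm_1} to certify that neither $(0,P)$, nor $(P/2,P/2)$, nor $(P,0)$ is even a local minimizer --- which refutes the conjecture outright.
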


The right hand side expression in \eqref{eqn_solution}, i.e. $f = f(q_1,q_2)$, is symmetric in the sense of $f(q_1,q_2)=f(q_2,q_1)$. It should be noted that while $f$ has the form of a bivariate function, it is intrinsically univariate in our problem \eqref{eqn_outprob} due to the relation $q_1+q_2=P$.

Applying derivatives and exploiting the symmetry between $q_1$ and $q_2$, the following result can be directly obtained as a sufficient condition for rejecting the Outage Probability Conjecture:
\begin{thm} \label{thm_1}
  If there exist positive integers $m, n$, such that
  \begin{align}
    \left. \frac{\mathrm{d}^i f}{\mathrm{d} q_1^i} \right|_{q_1 = 0} &= 0, \quad 1\le i\le m-1, \label{eqn_thm_1}\\
    \left. \frac{\mathrm{d}^m f}{\mathrm{d} q_1^m} \right|_{q_1 = 0} &< 0, \label{eqn_thm_2}\\
    \left. \frac{\mathrm{d}^j f}{\mathrm{d} q_1^j} \right|_{q_1 = \frac{P}{2}} &= 0, \quad 1\le j\le n-1,\label{eqn_thm_3}\\
    \left. \frac{\mathrm{d}^n f}{\mathrm{d} q_1^n} \right|_{q_1 = \frac{P}{2}} &< 0, \label{eqn_thm_4}
  \end{align}
  then the Conjecture \ref{conjTIMO} does not hold.
\end{thm}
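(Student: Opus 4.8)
The plan is to collapse the constrained bivariate problem to an unconstrained univariate one and then show that, under the stated hypotheses, \emph{none} of the three conjectured points can be a global minimizer. Since $q_1 + q_2 = P$ on $\mathcal D$, I would set $g(q_1) := f(q_1, P - q_1)$ for $q_1 \in [0, P]$, so that $\frac{\mathrm d^i f}{\mathrm d q_1^i} = g^{(i)}(q_1)$ (the total derivative along the constraint) and the candidate points $(0,P)$, $(P/2, P/2)$, $(P,0)$ correspond to $q_1 = 0, \tfrac P2, P$. Because $f$ is continuous in $(q_1,q_2)$ (the boundary formula \eqref{eqn_solution_0} being the limit of \eqref{eqn_solution}) and $[0,P]$ is compact, $g$ attains a global minimum on $[0,P]$; the whole strategy is to prove that this minimizer sits at none of the three abscissae above, which refutes Conjecture \ref{conjTIMO}.

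First I would record the symmetry $g(q_1) = g(P - q_1)$, inherited from $f(q_1,q_2) = f(q_2,q_1)$. This makes $u \mapsto g(\tfrac P2 + u)$ an even function, so every odd-order derivative of $g$ at $q_1 = \tfrac P2$ vanishes identically. Consequently the hypothesis \eqref{eqn_thm_4}, namely $g^{(n)}(\tfrac P2) < 0$, forces $n$ to be even, since an odd-order derivative there could not be nonzero. I will use this parity fact below.

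Next I would apply Taylor's theorem at the two points. At $q_1 = 0$, the hypotheses \eqref{eqn_thm_1}--\eqref{eqn_thm_2} give the one-sided expansion $g(q_1) = g(0) + \tfrac{1}{m!} g^{(m)}(0)\, q_1^m + o(q_1^m)$ with strictly negative leading coefficient, so $g(q_1) < g(0)$ for all sufficiently small $q_1 > 0$; hence $q_1 = 0$ (the point $(0,P)$) is not even a local minimizer, and by the symmetry just established neither is $q_1 = P$ (the point $(P,0)$). At $q_1 = \tfrac P2$, the hypotheses \eqref{eqn_thm_3}--\eqref{eqn_thm_4} together with $n$ even give $g(q_1) = g(\tfrac P2) + \tfrac{1}{n!} g^{(n)}(\tfrac P2)\,(q_1 - \tfrac P2)^n + o\!\left((q_1 - \tfrac P2)^n\right)$, where $(q_1 - \tfrac P2)^n \ge 0$ and the coefficient is negative, so $g(q_1) < g(\tfrac P2)$ for $q_1$ near but distinct from $\tfrac P2$; thus $q_1 = \tfrac P2$ is actually a strict local \emph{maximum}, and certainly not a minimizer.

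Combining these observations, none of $(0,P)$, $(P/2,P/2)$, $(P,0)$ is a local minimum of $g$, so none can be the global minimizer guaranteed by compactness; the true minimizer lies outside the conjectured set, contradicting Conjecture \ref{conjTIMO}. I do not anticipate a deep obstacle here: the argument is essentially a local Taylor analysis plus compactness. The one place requiring genuine care is justifying that $g$ is indeed $m$-fold differentiable (one-sidedly) at the endpoint and $n$-fold differentiable at the midpoint, with the Peano remainders above truly dominated by the leading terms. This smoothness should follow from differentiating the integral representation \eqref{eqn_solution} under the integral sign, and verifying the validity of that differentiation is the step I would treat most carefully.
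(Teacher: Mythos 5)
Your proposal is correct and follows essentially the same route as the paper's proof: a first-nonvanishing-derivative analysis at $q_1=0$ and $q_1=\frac{P}{2}$, combined with the symmetry $f(q_1,q_2)=f(q_2,q_1)$ and continuity of $f$ on the compact interval $[0,P]$, to conclude that an interior point beats all three conjectured candidates. Your explicit remark that the symmetry forces $n$ to be even (making $q_1=\frac{P}{2}$ a strict local maximum) is a detail the paper's shorter monotonicity argument glosses over, but the underlying idea is the same.
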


\begin{proof}
From the definition of derivatives, the above conditions imply that $f(q,P-q)$ is monotone decreasing in some interval $[0,\epsilon_1)$ and $[\frac{P}{2},\frac{P}{2}+\epsilon_2)$, where $\epsilon_1$ and $\epsilon_2$ are sufficiently small. By symmetry $f$ is monotone increasing on $(\frac{P}{2}-\epsilon_2,\frac{P}{2}]$. Therefore from the continuity of $f$ there exist some $q_m$ with $\epsilon_1 < q_m < \frac{P}{2}-\epsilon_2$ such that $f(q_m,P-q_m)<\min(f(0,P),f(\frac{P}{2},\frac{P}{2}))$.
\end{proof}

\begin{rem} \label{rem_1}
Observe the fact that an analytic function will not take the value of 0 over a set with measure greater than zero, unless it is constant zero. Therefore for general values of $R$ and $P$, the first and the second order derivatives will not be constantly 0, and we need not consider the cases $m>1$ and $n>2$ in general. Note that when $q_1=q_2=\frac{P}{2}$, the first derivative of $f$ to $q_1, q_2$ at this point is constantly zero. Therefore $n=2$ has to be considered.
\end{rem}

\subsection{A Counter-Example}

From the results above, it becomes theoretically straightforward to identify counter-examples by examining conditions \eqref{eqn_thm_1}-\eqref{eqn_thm_4}. According to Remark \ref{rem_1}, we only need to check the first and the second order derivatives of expression \eqref{eqn_solution}, the derivation of which is included in Appendix.

As a specific instance, take the set of parameters
$$
t=2, \quad r=2, \quad P=0.5,  \quad R=\ln 3.
$$
A direct verification of the first and the second order derivatives using results in Appendix shows that
\begin{align*}
  \left. \frac{\mathrm{d} f}{\mathrm{d} q_1} \right|_{q_1 = 0} &= 0,\\
  \left. \frac{\mathrm{d}^2 f}{\mathrm{d} q_1^2} \right|_{q_1 = 0} &= -\frac{8}{\mathrm{e}^4} <0, \\
  \left. \frac{\mathrm{d} f}{\mathrm{d} q_1} \right|_{q_1 = \frac{P}{2}} &= 0,\\
  \left. \frac{\mathrm{d}^2 f}{\mathrm{d} q_1^2} \right|_{q_1 = \frac{P}{2}} &= \int_0^1 \int_0^2 \gamma(t,\rho) \mathrm{d}s \mathrm{d}\rho \approx -0.1014 <0,
\end{align*}
where
\begin{align*}
\gamma(s,\rho) &= 4^6 s t^2 \mathrm{e}^{-4\left(s+t\right)} \left( 2-8t+8s \right),\\
t &=\frac{2-s}{1+\rho s}.
\end{align*}
Therefore there would exist some $q_m$ with $0<q_m<\frac{P}{2}$ such that $(q_1,q_2)=(q_m,P-q_m)$ yields a smaller outage probability than as predicted by the Outage Probability Conjecture.

A numerical verification of the counter-example is demonstrated in Fig. \ref{fig_counter}.
\begin{figure}[!t]
  \centering
  \includegraphics[width=2.8in]{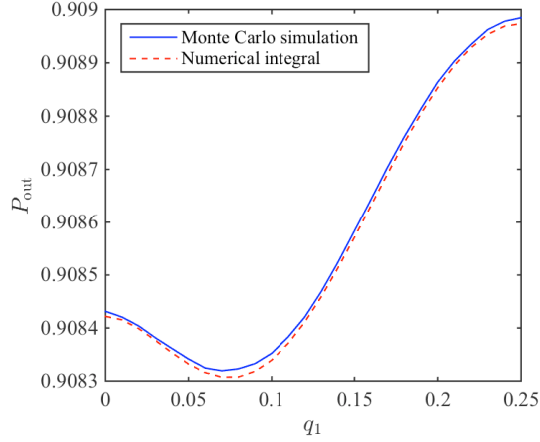}
  \caption{Outage Probability is plotted as a function of $q_1$ under parameters $t=2, r=2, P=0.5, R=\ln 3$.
  Both Monte Carlo simulation and numerical integration are included.
  It can be from the curve that the minimum is attained within the interval $(0.05,0.1)$, instead of at 0 or ${P}/{2}$ as predicted by the Conjecture.}
  \label{fig_counter}
\end{figure}

\begin{rem}
  The above counter-example suggests that the seemingly intuitive symmetry in the Outage Probability Conjecture is actually misleading. One possible explanation is that the expression \eqref{eqn_outprob} in calculating outage probability is non-convex, and therefore does not enjoy many of the promising properties of convex functions.
\end{rem}

\section{Discussion of the General MIMO Case}

In this section we provide some discussions about dealing with the general MIMO case with arbitrary $t$. We first derive a generalization of the previous sufficient condition, and then provide a method for the calculation of derivatives as well as for direct calculation of the outage probability under a certain type of power distribution.

\subsection{Extending the Sufficient Condition}

We can generalize our main result into general cases with arbitrary $t$.
Denote the power distribution vector by
$$
\tilde{\vq}(k,t) = \frac{P}{k}\left[ \underbrace{1,\dots,1}_{k}, \underbrace{0,\dots,0}_{t-k} \right]^{\rm T},  \quad 1\le k\le t,
$$
where $k$ nonzero transmitters with uniform power are present among a total of $t$. The first $k$ are set as nonzero without loss of generality. According to the Conjecture, the minimizer of $f(\vq)$ is among these vectors.

Similar to the intuition of Theorem \ref{thm_1}, the following result can be obtained:
\begin{thm} \label{thm_2}
For arbitrary $t$, the Outage Probability Conjecture does not hold,
if for all integers $k$ with $1\le k\le t$, at least one of the following conditions is satisfied :
  \begin{align}
    \left. \frac{\partial f}{\partial q_i} \right|_{\vq=\tilde{\vq}(k,t)} - \left. \frac{\partial f}{\partial q_l} \right|_{\vq=\tilde{\vq}(k,t)} &> 0, \label{eqn_anyT1} \\
    \left. \frac{\partial^2 f}{\partial q_i^2} \right|_{\vq=\tilde{\vq}(k,t)} - \left. \frac{\partial^2 f}{ \partial q_i \partial q_j} \right|_{\vq=\tilde{\vq}(k,t)} &> 0, \label{eqn_anyT2}
  \end{align}
where $i$ and $j$ are two different indices such that $q_i=q_j={P}/{k}$, and $l$ is an index such that $q_l=0$.
In particular, when $k=t$ condition \eqref{eqn_anyT1} is omitted, and when $k=1$ condition \eqref{eqn_anyT2} is omitted. \footnote{The specific choice of the indices does not matter due to symmetry.}
\end{thm}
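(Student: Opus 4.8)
The plan is to convert ``the Conjecture fails'' into a local statement handled one candidate at a time. Since the feasible set $\mathcal{D}=\{\vq\ge 0:\ \sum_i q_i=P\}$ is compact and $f$ is continuous on it, $f$ attains a global minimum, which is in particular a local minimizer of $f|_{\mathcal D}$. Hence it suffices to show that for \emph{every} $k$ the candidate $\tilde{\vq}(k,t)$ fails to be a local minimizer; the global minimizer is then forced off the candidate set and Conjecture~\ref{conjTIMO} cannot hold. This is exactly why the hypothesis reads ``for all $k$, at least one of \eqref{eqn_anyT1}, \eqref{eqn_anyT2} holds'': each $k$ is disposed of independently by exhibiting a feasible descent direction at $\tilde{\vq}(k,t)$.

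Before the descent computations I would record the permutation symmetry that collapses the high-dimensional geometry to two scalars. Because the columns of $\vH$ are i.i.d., $f(\vq)$ is invariant under permuting the entries of $\vq$; therefore at $\tilde{\vq}(k,t)$ every first partial over the active block $\{1,\dots,k\}$ equals one common value, every first partial over the idle block equals another, and the active-block Hessian is equicorrelated, with equal diagonal entries $\partial^2 f/\partial q_i^2$ and equal off-diagonal entries $\partial^2 f/\partial q_i\partial q_j$. This is the $t$-variable analogue of the symmetry $f(q_1,q_2)=f(q_2,q_1)$ used in the proof of Theorem~\ref{thm_1}, and it is precisely what reduces the entire local analysis at $\tilde{\vq}(k,t)$ to the two differences appearing in \eqref{eqn_anyT1} and \eqref{eqn_anyT2}.

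I would then read the two certificates as two feasible descent directions on the simplex, the admissible directions being those $\vd$ with $\sum_i d_i=0$. Condition \eqref{eqn_anyT1}, available precisely when $k<t$ (hence omitted at $k=t$), corresponds to the direction $\ve_l-\ve_i$ that transfers power from an active transmitter $i$ to an idle transmitter $l$; it keeps $\mathrm{tr}(\vQ)=P$ and stays feasible for a small positive step, since $q_i=P/k>0$ decreases while $q_l=0$ increases. Its one-sided directional derivative is $\partial f/\partial q_l-\partial f/\partial q_i=-(\partial f/\partial q_i-\partial f/\partial q_l)$, which \eqref{eqn_anyT1} makes strictly negative, so $f$ strictly decreases and $\tilde{\vq}(k,t)$ is not a local minimizer. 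Condition \eqref{eqn_anyT2}, available precisely when $k\ge 2$ (hence omitted at $k=1$), corresponds instead to the interior direction $\ve_i-\ve_j$ between two active transmitters; by the symmetry above its first directional derivative vanishes, and its second directional derivative equals $2(\partial^2 f/\partial q_i^2-\partial^2 f/\partial q_i\partial q_j)$. This second difference is the direct multi-transmitter generalization of the $n=2$ quantity $\mathrm{d}^2 f/\mathrm{d}q_1^2$ in \eqref{eqn_thm_3}--\eqref{eqn_thm_4}: condition \eqref{eqn_anyT2} is a sign condition on it, and the sign it prescribes makes $f$ strictly decrease along the line $\ve_i-\ve_j$ away from $\tilde{\vq}(k,t)$, so that the continuity argument used in the proof of Theorem~\ref{thm_1} yields a nearby feasible point with strictly smaller $f$; hence again $\tilde{\vq}(k,t)$ is not a local minimizer.

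I expect the genuine difficulty to be boundary regularity, not the descent bookkeeping. The integral \eqref{eqn_solution} defines $f$ only for strictly positive powers, the idle case being recovered as the limit \eqref{eqn_solution_0}; so before differentiating at $\tilde{\vq}(k,t)$ I must justify that $f$ extends smoothly enough to $\partial\mathcal{D}$ for the one-sided directional derivatives toward the interior to exist and to equal the limits of the interior partials, and that both the permutation symmetry and the equicorrelated Hessian structure persist in that limit. Carrying this out for general $r$ from the $\chi^2_{2r}$/Beta representation, via dominated convergence and differentiation under the integral sign, is the delicate step; once it is secured, the two directional-derivative computations are routine, and combining the per-$k$ certificates with the compactness reduction rejects the Conjecture for arbitrary $t$.
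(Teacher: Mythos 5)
Your strategy is the same as the paper's: reduce the claim to showing that no candidate $\tilde{\vq}(k,t)$ is a local minimizer of $f$ on the power simplex, and test this with exactly the two perturbations the paper uses, namely $\tilde{\vq}'(k,t)$ (transfer $\epsilon$ from an active to an idle entry, your $\ve_l-\ve_i$) and $\tilde{\vq}''(k,t)$ (transfer $\epsilon$ between two active entries, your $\ve_i-\ve_j$). Your explicit compactness step and the permutation-symmetry bookkeeping are left implicit in the paper but are correct and worth making explicit. Your treatment of \eqref{eqn_anyT1} is also sound: the one-sided derivative along $\ve_l-\ve_i$ equals $\partial f/\partial q_l-\partial f/\partial q_i$, which \eqref{eqn_anyT1} makes strictly negative, so that direction genuinely descends.

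The gap is in the second-order certificate, where you assert the sign instead of computing it. Along $\vd=\ve_i-\ve_j$ the first directional derivative vanishes by symmetry, and the second is
\begin{equation*}
\frac{\partial^2 f}{\partial q_i^2}+\frac{\partial^2 f}{\partial q_j^2}-2\frac{\partial^2 f}{\partial q_i\partial q_j}
=2\left(\frac{\partial^2 f}{\partial q_i^2}-\frac{\partial^2 f}{\partial q_i\partial q_j}\right),
\end{equation*}
exactly the quantity you identify. But \eqref{eqn_anyT2} makes this strictly \emph{positive}, and a vanishing first derivative with a positive second directional derivative means $f$ strictly \emph{increases} along $\pm\vd$: the condition as stated certifies that $\tilde{\vq}(k,t)$ is a strict local minimizer along that line, the opposite of a descent direction. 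Your sentence claiming that ``the sign it prescribes makes $f$ strictly decrease'' is therefore false as written; descent requires the reversed inequality $\partial^2 f/\partial q_i^2-\partial^2 f/\partial q_i\partial q_j<0$. That reversed sign is also the one consistent with Theorem \ref{thm_1}: by the Appendix identity, at the symmetric point $\mathrm{d}^2 f/\mathrm{d}q_1^2=2\left(\partial^2 f/\partial q_1^2-\partial^2 f/\partial q_1\partial q_2\right)$, and \eqref{eqn_thm_4} requires this to be negative, not positive. In fairness, the paper's own proof commits the same slip: it claims $f(\tilde{\vq}'')<f(\tilde{\vq})$ ``is equivalent to \eqref{eqn_anyT2}'' when it is equivalent to the reversed inequality. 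So you have effectively reproduced the paper's argument together with its flaw; but the two-line Taylor expansion you yourself set up exposes it, and a complete proof must either flip the inequality in \eqref{eqn_anyT2} or fail. By contrast, the boundary-regularity issue you flag as the genuine difficulty is secondary: for \eqref{eqn_anyT2} the perturbation only moves coordinates that are strictly positive, and for \eqref{eqn_anyT1} the paper takes the same one-sided derivative at $q_l=0$ without comment.
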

\begin{proof}
In order to prove that the Conjecture fails, it suffices to show that at every $\tilde{\vq}(k,t)$, some slight variation on $\tilde{\vq}$ within feasible region would result in a decrease in function value $f$, so that $\tilde{\vq}$ cannot be a local minimum point. For this variation, we in particular look at two special cases, $\tilde{\vq}'(k,t)$ and $\tilde{\vq}''(k,t)$, where only two entries are modified, namely the following:
\begin{align}
\nonumber
\tilde{\vq}'(k,t) &= \left[ \frac{P}{k}-\epsilon, \underbrace{\frac{P}{k},\dots,\frac{P}{k}}_{k-1}, \epsilon, \underbrace{0,\dots,0}_{t-k-1} \right]^{\rm T}, \\
\nonumber
\tilde{\vq}''(k,t) &= \left[ \frac{P}{k}-\epsilon, \frac{P}{k}+\epsilon, \underbrace{\frac{P}{k},\dots,\frac{P}{k}}_{k-2}, \underbrace{0,\dots,0}_{t-k} \right]^{\rm T},
\end{align}
where $\epsilon>0$ is sufficiently small. As a matter of fact, inspecting the above two cases is both sufficient and necessary for showing that $\tilde{\vq}(k,t)$ is not the local minimum. Take any variation $\tilde{\vq}+\epsilon \vt$ at $\tilde{\vq}$ with $\left|\vt\right|=1$, the increment $\epsilon \vt$ can be decomposed into some combinations of the above two patterns, by exploiting the symmetry between variables.

Now we observe that $f(\vq'(k,t))<f(\vq(k,t))$ is equivalent to \eqref{eqn_anyT1}, and $f(\vq''(k,t))<f(\vq(k,t))$ is equivalent to \eqref{eqn_anyT2}. Therefore when for every $\tilde{\vq}(k,t)$ at least one of these two inequalities holds, it is implied that $\tilde{\vq}(k,t)$ is not the local minimum and thus not the global minimum. This contradicts the Conjecture.
\end{proof}

\subsection{Special Power Distribution and Calculating Derivatives}

In this part we present an alternative method for evaluating the expressions in Theorem \ref{thm_2}, as well as calculating the outage probability \eqref{eqn_outprob} directly for arbitrary $t$ under a particular constraint on the power distribution matrix $\vQ$.

Let $\vQ =\mathrm{diag}(q_1,\dots,q_t)$.
We focus on cases where there exists $q_0 > 0$ such that
$$
\#\left\{ q_i \ |\ q_i\notin \{0,q_0\}, 1\le i\le t \right\} \le 2.
$$
In other words, at most two non-zero diagonal entries deviate from a common value $q_0$. These two possibly deviant values are denoted by $q_a$ and $q_b$. We further assume that $q_a,q_b > 0$, because when they take the value of 0, we can always reduce $k$ or arrive at the trivial case of $k=1$. Also, this particular type of $\vQ$ is sufficient to describe any power distribution for $t=3$ cases.

Without loss of generality, we can assume that
$$
\vQ = \mathrm{diag} \left( \underbrace{q_0,\dots,q_0}_{k-2}, q_a, q_b, \underbrace{0,\dots,0}_{t-k} \right).
$$
These transmitters can be divided into three groups: the uniform, the deviant, and the non-functioning. We define the sub-matrices $\vQ_1$ and $\vQ_2$ as the following:
\begin{align}
  \vQ_1 &= \mathrm{diag} \left( \underbrace{q_0,\dots,q_0}_{k-2} \right) = q_0 \vI_{k-2}. \\
  \vQ_2 &= \mathrm{diag} \left( q_a, q_b \right) =
  \begin{bmatrix}
    q_a & 0 \\
    0 & q_b
  \end{bmatrix}.
\end{align}
The Gaussian matrix $\vH$ can be divided correspondingly as
$$
\vH = [\vH_1, \vH_2, \vH_0],
$$
where $\vH_1 \in \mathbb{C}^{r\times (k-2)}, \vH_2 \in \mathbb{C}^{r\times 2}, \vH_0 \in \mathbb{C}^{r\times (t-k)}$.

The following lemma is required for our discussion.
\begin{lem}\cite{ratnarajah2005eigenvalues}\label{lemma1}
For a standard complex Gaussian matrix $\vA \in \mathbb{C}^{n\times m}$,
the distribution density of the eigenvalues $\bm{\lambda} = [\lambda_1,\dots,\lambda_m]^{\rm T}$ of Hermitian matrix $\vA^*\vA$ was derived as the following:
\begin{align*}
  \varphi_{m,n}(\bm{\lambda}) = \frac{2^{-mn}\pi^{m(m-1)}}{\Gamma_m(n)\Gamma_m(m)} 
  \exp\left(-\frac{1}{2}\sum_{i=1}^m \lambda_i\right) \prod_{i=1}^m \lambda_i^{n-m} \prod_{i<j}^m (\lambda_i-\lambda_j)^2,
\end{align*}
where $\Gamma_p(\cdot)$ denotes the complex multivariate Gamma function.
\end{lem}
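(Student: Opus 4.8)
Since $\vA^*\vA$ is the complex Wishart matrix, my plan is to obtain its joint eigenvalue law in three stages: write down the Gaussian law of $\vA$, push it forward to the law of the Hermitian matrix $\vW := \vA^*\vA$, and then change coordinates from the entries of $\vW$ to its spectrum. The exponent $-\tfrac{1}{2}\sum_i\lambda_i$ together with the prefactor $2^{-mn}$ fixes the normalization convention: the real and imaginary parts of each entry of $\vA$ are independent $\mathcal{N}(0,1)$, so that the density of $\vA$ with respect to Lebesgue measure on $\mathbb{C}^{n\times m}\cong\mathbb{R}^{2nm}$ is proportional to $\exp\!\left(-\tfrac{1}{2}\mathrm{tr}(\vA^*\vA)\right)$. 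I would fix this convention at the outset and carry it through every subsequent change of variables.

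First I would establish the matrix density of $\vW=\vA^*\vA$ on the cone of positive definite Hermitian $m\times m$ matrices, for $n\ge m$. The cleanest route is the factorization $\vA=\vV\vR$ with $\vV\in\mathbb{C}^{n\times m}$ having orthonormal columns and $\vR$ upper triangular (a complex QR/Cholesky factorization, so that $\vW=\vR^*\vR$), after which one integrates out the Stiefel "angular'' variable $\vV$ against its invariant measure and reads off
\begin{equation*}
p_{\vW}(\vW)\ \propto\ (\det\vW)^{\,n-m}\exp\!\left(-\tfrac{1}{2}\,\mathrm{tr}\,\vW\right),\qquad \vW\succ 0,
\end{equation*}
with the proportionality constant expressible through the complex multivariate Gamma function $\Gamma_m(n)=\pi^{m(m-1)/2}\prod_{k=1}^m\Gamma(n-k+1)$. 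The factor $(\det\vW)^{n-m}=\prod_i\lambda_i^{\,n-m}$ already matches the middle product of the claimed formula, and the exponential matches $\exp(-\tfrac12\sum_i\lambda_i)$ since $\mathrm{tr}\,\vW=\sum_i\lambda_i$.

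Next I would diagonalize $\vW=\vU\Lambda\vU^*$ with $\vU\in\mathcal{U}(m)$ and $\Lambda=\mathrm{diag}(\lambda_1,\dots,\lambda_m)$, and compute the Jacobian of this map from Hermitian-matrix coordinates to $(\Lambda,\vU)$. This is the crucial step: differentiating $\vW=\vU\Lambda\vU^*$ and expressing $\mathrm{d}\vW$ in the frame defined by $\vU$ shows that the off-diagonal variations contribute the squared Vandermonde factor (the $\beta=2$ power characteristic of the complex/unitary ensemble), so that
\begin{equation*}
\mathrm{d}\vW \;=\; \Big(\textstyle\prod_{i<j}(\lambda_i-\lambda_j)^2\Big)\, \mathrm{d}\Lambda\,\mathrm{d}\mu(\vU),
\end{equation*}
where $\mathrm{d}\mu$ is the invariant measure on $\mathcal{U}(m)$ modulo the diagonal torus. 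Substituting the Wishart density and integrating $\vU$ over this coset produces a finite volume constant, which I would absorb into $\Gamma_m(m)$; what remains are exactly the three structural factors of the stated law, and matching all constants then yields the displayed prefactor $2^{-mn}\pi^{m(m-1)}/(\Gamma_m(n)\Gamma_m(m))$.

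The main obstacle is the Jacobian computation in the diagonalization step—producing the squared Vandermonde factor and, more delicately, pinning down every multiplicative constant so that the two complex multivariate Gamma functions appear with the correct powers of $2$ and $\pi$. A safe cross-check is to specialize to $m=1$, where $\vW=\lambda_1\sim\chi^2_{2n}$ and the formula must collapse to the corresponding Chi-squared density (the same Gamma family underlying \eqref{eqn_solution}, up to the variance-$2$ convention used here); this validates both the normalization and the overall constant simultaneously. Since the statement is quoted from \cite{ratnarajah2005eigenvalues}, I would present the derivation only to the level of identifying the three factors and assembling the constant, deferring the detailed Jacobian bookkeeping to that reference.
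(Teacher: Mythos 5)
The paper never proves this lemma---it is quoted directly from \cite{ratnarajah2005eigenvalues} and used as an external ingredient in Section IV---so there is no internal argument to compare yours against; your sketch stands or falls on its own, and it stands. It is the standard derivation of the complex Wishart eigenvalue law, with the three steps in the right order: the Gaussian matrix density proportional to $\exp(-\tfrac12\mathrm{tr}(\vA^*\vA))$, the Wishart density proportional to $(\det\vW)^{n-m}\exp(-\tfrac12\mathrm{tr}\,\vW)$ on the positive definite Hermitian cone (obtained via QR/Cholesky and integrating out the Stiefel factor, valid for $n\ge m$), and the diagonalization Jacobian contributing the $\beta=2$ squared Vandermonde $\prod_{i<j}(\lambda_i-\lambda_j)^2$ with the unitary-coset volume folded into the constant. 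Two details you handled are exactly the ones that are easy to get wrong. First, you correctly read off from the $2^{-mn}$ prefactor and the $\exp(-\tfrac12\sum_i\lambda_i)$ exponent that the lemma's ``standard complex Gaussian'' must mean real and imaginary parts each $\mathcal{N}(0,1)$, i.e. $\mathbb{E}|A_{ij}|^2=2$; note this is \emph{not} the convention $\mathbb{E}|H_{ij}|^2=1$ used in the paper's channel model, so when the paper invokes Lemma \ref{lemma1} for $\vH_1\vH_1^*$ a rescaling $\lambda\mapsto\lambda/2$ is implicitly required---a latent inconsistency in the paper that your convention bookkeeping would have caught. Second, your $m=1$ cross-check is decisive: the formula collapses to $\lambda^{n-1}e^{-\lambda/2}/(2^n\Gamma(n))$, the $\chi^2_{2n}$ density, confirming both the exponent convention and the overall constant. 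The only schematic step---pinning the exact powers of $2$ and $\pi$ through $\Gamma_m(m)$ and the volume of $\mathcal{U}(m)$---is pure bookkeeping, and deferring it to the cited reference is appropriate given that the paper itself treats the result as borrowed.
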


Let the eigenvalue decomposition of $\vH_1 \vH_1^*$ be $\vU {\bm\Lambda} \vU^*$. The distribution of eigenvalues $\mathrm{diag}(\bm\Lambda)$ comes directly from the lemma.
Define $\tilde{\vH}_2 \in \mathbb{C}^{r\times 2}$ as
$$
\tilde{\vH}_2 = \vU^* \vH_2 = [\vh_a, \vh_b].
$$
Because $\vH_1$ and $\vH_2$ are independent,
so is the unitary $\vU$ and $\vH_2$, and therefore $\tilde{\vH}_2$ has the same distribution as $\vH_2$.
We have the following equations:
\begin{align}
  \nonumber
  \left|\vI_r+\vH\vQ\vH^*\right| 
  \nonumber
  &= \left|\vI_r + \vH_1 \vQ_1 \vH_1^* + \vH_2 \vQ_2 \vH_2^* \right|\\
  \nonumber
  &= \left|\vI_r + q_0 \vU{\bm\Lambda} \vU^* + \vH_2 \vQ_2 \vH_2^* \right| \\
  \nonumber
  &= \left|{\bm\Lambda}' + \tilde{\vH}_2 \vQ_2 \tilde{\vH}_2^* \right| \\
  \nonumber
  &= \left|{\bm\Lambda}'\right| \left|\vI_2 + \vQ_2 \tilde{\vH}_2^* ({\bm\Lambda}')^{-1} \tilde{\vH}_2\right| \\
  \label{eqn_complex}
  &= \left(\prod_{i=1}^{r}\lambda'_i\right) \left( (1+q_a m_a)(1+q_b m_b)-q_a q_b \left|\xi_{ab}\right|^2 \right)
\end{align}
where ${\bm\Lambda}' = \vI_r + q_0 {\bm\Lambda}$ is diagonal, and
\begin{align}
  m_a &= \vh_a^*({\bm\Lambda}')^{-1}\vh_a, \\
  m_b &= \vh_b^*({\bm\Lambda}')^{-1}\vh_b, \\
  \xi_{ab} &= \vh_a^*({\bm\Lambda}')^{-1}\vh_b = \left(\vh_b^*({\bm\Lambda}')^{-1}\vh_a\right)^*.
\end{align}

The joint distribution density $\psi(m_a,m_b,\xi_{ab})$ can be directly derived from the distribution of standard complex Gaussian vectors $\vh_a, \vh_b$, and Lemma \ref{lemma1}. Denote \eqref{eqn_complex} as $F(\bm{\lambda},m_a,m_b,\xi_{ab})$. Then the outage probability can be calculated as
\begin{equation} \label{eqn_integralcomplex}
  \int_{F<{\rm e}^R-1}  \psi(m_a,m_b,\xi_{ab})  \mathrm{d}\bm{\lambda} \mathrm{d}m_a \mathrm{d}m_b \mathrm{d}\xi_{ab},
\end{equation}
This expression contains far less integral variables due to our simplification assumption.

This approach is also useful in that it can be directly applied to the calculation of derivatives. Recall Theorem \ref{thm_2} where our intuition comes from modifying two entries of $\vq$ at once, as can be seen from the proof. Therefore our specific type of $\vQ$ discussed here is suitable for the derivative calculation problem posed by Theorem \ref{thm_2}.

\begin{rem}
As an important note, our conversion of the original matrix-form inequality into the integral expression is crucial, because direct Monte Carlo simulation in matrix form can be highly unstable and therefore unreliable. Simplifying matrix expression into integral of elementary functions guarantees the accuracy of our results.
\end{rem}

\section{Numerical Verification}

In this section, we present some experiments to illustrate in what cases the Outage Probability Conjecture is likely to fail. Throughout our discussion, the number of transmitters $t$ is set as 2.

We take the system parameters $(r,R,P)$ as different sets of values, and look for the $q_m \in \left[ 0,{P}/{2} \right]$ that minimizes the outage probability for $t=2$ from the result of \eqref{eqn_solution} and \eqref{eqn_solution_0}. The number of receivers $r$ is set as 2.
We first conduct a broad search, where the channel rate $R$ ranges from $0.05r$ to $5r$ at step $0.05r$, and the power $P$ range from $0.05r$ to $5r$ at step $0.05r$.
The multiplication by $r$ is due to the consideration of controlling the rate and power per receiver.
Then we look for the $q_m$ that achieves the smallest outage probability among these sample values.
When the minimum is not attained at either end, it is sufficient to conclude that the conjecture is false.
The result is shown as the first figure in the top plot in Fig. \ref{fig_simu}, which only identifies one region where the conjecture fails.
A thick red dot indicates a $(R,P)$ set where we have sufficient evidence that the conjecture failed, and a blue dot indicates where the conjecture appears correct at least from our sample points.
The blank area on the bottom-right is discarded, where outage probability becomes so close to $1$ that numerical calculation becomes unstable.
Such cases are also meaningless because the channel would be useless in reality.

\begin{figure}[!t]
  \centering
  \includegraphics[width=2.8in]{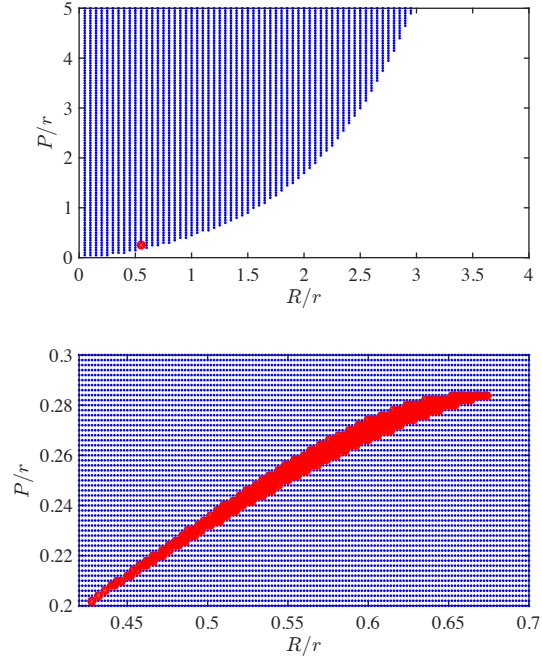}
  \caption{Test of the Conjecture with $t=r=2$ and $(R,P)$ varying. The Conjecture is mostly true with few exceptions. The bottom plot specifies the exception region in the top plot, where small blue dot and big red dot denotes the true and false sample, respectively.}
  \label{fig_simu}
\end{figure}

Taking a closer look at the failure region, we zoom in the parameters, and let $R$ range from $0.42r$ to $0.7r$ at step $0.02r$, and $P$ from $0.2r$ to $0.28r$ at step $0.02r$. $q$ is set at sample values from $0$ to $0.5P$ at step $0.025P$. The detail is plotted in the bottom plot in Fig. \ref{fig_simu}.

From the experiment result above, we find that the Outage Probability Conjecture is likely to be true for most parameter sets, which explains why it has been trusted by researchers for so many years.
We also tested the Conjecture for $r=3$ and $r=4$ with a broad range of parameters $(R,P)$, and surprisingly did not find any counter-example.
Future work could go deeper into this problem as for whether the identified region above is the only failure region, or with what parameters the Conjecture would fail.

\begin{rem}
However, we note that of all counter-examples we have identified so far, the actual outage probability is approximately 0.9. A channel with such a failure rate would be way too unreliable for any practical application. We suggest that the conjecture is still safe to use in any real-world circumstances where the outage probability is small, although this is still to be verified.
\end{rem}

\section{Conclusion}

In this paper, we present the theoretical analysis of the Outage Probability Conjecture regarding non-ergodic MIMO channel. A sufficient condition for counter-examples to the Conjecture is proposed, and a specific counter-example is shown that rejects the long-believed Conjecture. The counter-example is verified by both theoretical derivation and numerical experiment. We also propose an alternative method to evaluate the derivatives and the outage probability with particular power distributions for arbitrary $t$. Experiments are conducted to test the Conjecture on different parameter sets, revealing that the Conjecture is true in most cases, despite the counter-example identified here. Possible future work can address the problem of what particular parameters would be counter-examples of the Conjecture, and whether the Conjecture is true in almost every other case.

\section{Appendix}

We shall provide the derivation of the first and the second order derivatives that appear in Theorem \ref{thm_1}. The reason for only calculating derivatives up to the second order is explained in Remark \ref{rem_1}.

We first reiterate the fact that $f=f(q_1,q_2)$ is actually a univariate function with constraint $q_1+q_2=P$. We will use the derivative such as ${\mathrm{d}f}/{\mathrm{d}q_1}$ to denote the derivative of $f$ as a univariate function with the constraint, and use the partial derivative such as ${\partial f(q_1,q_2)}/{\partial q_1}$ to denote the partial derivative of $f$ as a bivariate function without any constraint. Such notation would facilitate our discussion.

For better checking, the partial derivative equations with the symmetry in this problem are copied below.
\begin{align*}
\frac{\mathrm{d} f}{\mathrm{d} q_1} &= \frac{\partial f(q_1, q_2)}{\partial q_1} - \frac{\partial f(q_1, q_2)}{\partial q_2} \\
&= \frac{\partial f(q_2, q_1)}{\partial q_1} - \frac{\partial f(q_1, q_2)}{\partial q_2},\\
\frac{\mathrm{d}^2 f}{\mathrm{d} q_1^2} &= \frac{\partial^2 f(q_1, q_2)}{\partial q_1^2} + \frac{\partial^2 f(q_1, q_2)}{\partial q_2^2} - 2\frac{\partial^2 f(q_1, q_2)}{\partial q_1\partial q_2} \\
&= \frac{\partial^2 f(q_2, q_1)}{\partial q_1^2} + \frac{\partial^2 f(q_1, q_2)}{\partial q_2^2} - 2\frac{\partial^2 f(q_1, q_2)}{\partial q_1\partial q_2}.
\end{align*}

Let us begin from calculating the first order derivatives. Denote $u={\rm e}^R-1$ for briefness. When $q_1, q_2 \ne 0$,
\begin{equation}\label{eqn_d1}
\frac{\partial f(q_1, q_2)}{\partial q_2}
= \int_0^1 (r-1)\rho^{r-2}{\rm d}\rho \cdot A,
\end{equation}
where
\begin{align}
A := &\int_0^u \frac{s^{r-1}{\rm e}^{-\frac{s}{q_1}}}{(r-1)!q_1^r}\cdot
\frac{\partial \int_0^{\frac{1}{q_2}\frac{u -s}{1+\rho s}} \frac{t^{r-1}{\rm e}^{-t}}{(r-1)!}{\rm d}t}{\partial q_2}{\rm d}s  \nonumber\\
=& - \int_0^u \frac{s^{r-1}\left(\frac{u -s}{1+\rho s}\right)^r}{(r-1)!^2q_1^rq_2^{r+1}}{\rm e}^{-\frac{s}{q_1}-\frac{1}{q_2}\frac{u -s}{1+\rho s}} {\rm d}s \nonumber\\
=& - \int_0^u \frac{s^r\left(\frac{u -s}{1+\rho s}\right)^{r-1}}{(r-1)!^2q_1^rq_2^{r+1}}\frac{1+\rho u}{(1+\rho t)^2}{\rm e}^{-\frac{s}{q_2}-\frac{1}{q_1}\frac{u -s}{1+\rho s}} {\rm d}s. \nonumber
\end{align}
Since $f(q_1, q_2) = f(q_2, q_1)$, we have
$$
\frac{\partial f(q_1, q_2)}{\partial q_1} = \frac{\partial f(q_2, q_1)}{\partial q_1}.
$$
Therefore \eqref{eqn_d1} can be used to calculate the first order derivatives at any point except for 0 and $P$.

In addition, when one of the variables is zero, we have
\begin{align}
\nonumber
\left. \frac{\partial f(q_1, q_2)}{\partial q_2}\right|_{\substack{q_1 = 0,\\ q_2 = P}} = -\frac{u^r{\rm e}^{-\frac{u}{P}}}{(r-1)!P^{r+1}}
\end{align}
and
\begin{align}
\nonumber
\left. \frac{\partial f(q_1, q_2)}{\partial q_2}\right|_{\substack{q_1 = P,\\ q_2 = 0}} = -\frac{u^{r-1}{\rm e}^{-\frac{u}{P}}}{(r-1)!P^r}(r + (r-1)u).
\end{align}

Next we calculate the second order derivatives.
When $q_1, q_2 \ne 0$, we have
\begin{equation}
\frac{\partial^2 f(q_1, q_2)}{\partial q_2^2}
= -\int_0^1 (r-1)\rho^{r-2}{\rm d}\rho \cdot B, \label{eqn_d21}
\end{equation}
where
\begin{align*}
B :=& \int_0^u \frac{\partial}{\partial q_2} \frac{s^{r-1}\left(\frac{u -s}{1+\rho s}\right)^r}{(r-1)!^2q_1^rq_2^{r+1}}{\rm e}^{-\frac{s}{q_1}-\frac{1}{q_2}\frac{u -s}{1+\rho s}}  {\rm d}s \\
=& - \int_0^u \frac{s^{r-1}\left(\frac{u -s}{1+\rho s}\right)^r}{(r-1)!^2q_1^rq_2^{r+2}}{\rm e}^{-\frac{s}{q_1}-\frac{1}{q_2}\frac{u -s}{1+\rho s}}
 \left(r+1-\frac{1}{q_2}\frac{u -s}{1+\rho s}\right) {\rm d}s,
\end{align*}
and
\begin{equation}
\frac{\partial^2 f(q_1, q_2)}{\partial q_1\partial q_2}
= -\int_0^1 (r-1)\rho^{r-2}{\rm d}\rho \cdot C, \label{eqn_d22}
\end{equation}
where
\begin{align*}
C :=& \int_0^u \frac{\partial}{\partial q_1} \frac{s^{r-1}\left(\frac{u -s}{1+\rho s}\right)^r}{(r-1)!^2q_1^rq_2^{r+1}}{\rm e}^{-\frac{s}{q_1}-\frac{1}{q_2}\frac{u -s}{1+\rho s}} {\rm d}s \\
=& - \int_0^u \frac{s^{r-1}\left(\frac{u -s}{1+\rho s}\right)^r}{(r-1)!^2q_1^{r+1}q_2^{r+1}}{\rm e}^{-\frac{s}{q_1}-\frac{1}{q_2}\frac{u -s}{1+\rho s}} \left(r-\frac{s}{q_1}\right) {\rm d}s.
\end{align*}
Equations \eqref{eqn_d21} and \eqref{eqn_d22} can be used to calculate the second order derivatives at any point except for 0 and $P$.

In addition, when one of the variables is zero, we have
\begin{align*}
\left.\frac{\partial^2 f(q_1, q_2)}{\partial q_2^2}\right|_{\substack{q_1 = 0,\\ q_2 = P}} =&
\frac{u^r{\rm e}^{-\frac{u}{P}}}{(r-1)!P^{r+2}}\left(r+1-\frac{u}{P}\right),\\
\left.\frac{\partial^2 f(q_1, q_2)}{\partial q_2^2}\right|_{\substack{q_1 = P\\ q_2 = 0}} =&
\frac{u^r{\rm e}^{-\frac{u}{P}}}{(r-1)!P^{r+1}}r(r+1)\\
&\quad\cdot\left(P(r-1)(\frac{1}{u}+1)^2-\frac{1}{u}-\frac{2(r-1)}{r}-\frac{u(r-1)}{r+1}\right),\\
\left.\frac{\partial^2 f(q_1, q_2)}{\partial q_1\partial q_2}\right|_{\substack{q_1 = 0,\\ q_2 = P}} =&
\frac{u^r{\rm e}^{-\frac{u}{P}}}{(r-1)!P^{r+2}}\left(\frac{rP}{u}-1\right)\left(r+(r-1)u\right),
\end{align*}
and
$$
\left. \frac{\partial^2 f(q_1, q_2)}{\partial q_1\partial q_2}\right|_{\substack{q_1 = P\\ q_2 = 0}} =
\left. \frac{\partial^2 f(q_1, q_2)}{\partial q_1\partial q_2}\right|_{\substack{q_1 = 0,\\ q_2 = P}}.
$$

\bibliographystyle{IEEEtran}
\bibliography{mybibfile}


\end{document}